\newcommand{\eg}{{\it e.g.,}\xspace}
\newcommand{\ie}{{\it i.e.,}\xspace}
\newcommand{\etc}{{\it etc.}}
\newcommand{\ci}{{\it (i) }}
\newcommand{\cii}{{\it (ii) }}
\newcommand{\ciii}{{\it (iii) }}
\newcommand{\ca}{{\it (a) }}
\newcommand{\cb}{{\it (b) }}
\newcommand{\cc}{{\it (c) }}
\newcommand{\cd}{{\it (d) }}
\newcommand{\ce}{{\it (e) }}
\newtheorem{theorem}{Theorem}
\newtheorem{definition}[theorem]{Definition}
\begin{document}
\title{Restart-Based Security Mechanisms for Safety-Critical Embedded Systems}

 \author{
 \IEEEauthorblockN{Fardin Abdi, Chien-Ying Chen, Monowar Hasan, Songran Liu, Sibin Mohan and Marco Caccamo} 
 \IEEEauthorblockA{Dept. of Computer Science, University of Illinois at Urbana-Champaign, Urbana, IL, USA}
 }

\maketitle

\begin{abstract}
	\textbf{\small
		Many physical plants that are controlled by embedded systems have safety
		requirements that need to be respected at all times -- any deviations
		from expected behavior can result in damage to the system (often to the
		physical plant), the environment or even endanger human life. In recent
		times, malicious attacks against such systems have increased -- many with the
		intent to cause physical damage. 
		In this paper, we aim to decouple the \textit{safety} of the plant from 
		\textit{security} of the embedded system by taking advantage of the inherent inertia in such systems. 
		In this paper we present a system-wide restart-based framework that combines hardware 
		and software components to \ca maintain the system within the safety region
		and \cb thwart potential attackers from destabilizing the system.
		We demonstrate the feasibility of our approach using two realistic systems -- 
		an actual 3 degree of freedom (3-DoF) helicopter and a simulated warehouse
		temperature control unit.
		Our proof-of-concept implementation is tested against
		multiple emulated attacks on the control units of these systems.
	}
	
\end{abstract}

\section{Introduction}

Conventionally, embedded cyber-physical systems (CPS) used custom platforms, software/protocols and were normally not linked to external networks. As a result, security was typically not a priority in their design.
However, due to the drive towards remote monitoring and control facilitated by the growth of the Internet along with the increasing use of common-off-the-shelf (COTS) components, these traditional assumptions are increasingly being challenged as evidenced by recent attacks \cite{dronhack, ris_rts_1, 5751382}. Any successful, serious attack against such systems could lead to problems more catastrophic than just loss of data or availability because of the critical nature of such systems \cite{checkoway2011comprehensive,gollakota2011they}.

One of the challenges of designing secure embedded systems is to develop techniques/frameworks that consider the safety requirements of any physical system that they may control and operate within resource-constrained environments of embedded platforms. In the design of such a secure framework, it is important to distinguish between \textit{physical safety} and \textit{cyber-security}. For many embedded systems, violation of cyber-security~(\eg stealing system secrets, logging user activity pattern, taking control of the system, \etc) by itself may not be considered as harmful as violation of physical safety~(\eg destabilizing a helicopter by moving towards other objects or ground). 
In fact, in the context of embedded CPS, one of the goals of an adversary could be violate physical safety eventually and cause physical damage. Unfortunately, many of the existing security techniques are designed for traditional computing systems and do not take into the account the physical aspect of embedded systems. Hence, they are not always effective in ensuring physical safety in the presence of attacks.

On one hand, many existing security mechanisms aim to protect against malicious activities and/or detecting them as quickly as possible. It is, however, not easily possible to detect or protect from \textit{all} the attacks without any false negatives. On the other hand, safety guarantees verified by design methods for safety-critical CPS are only valid for the original uncorrupted system software. In other words, \textit{if security of the system can not be guaranteed, safety requirements may also be broken}. 
Hence there is need for security mechanisms that are specifically designed for safety-critical systems with verifiability in mind.

In this paper, we propose a design method for safety-critical embedded CPS that \textit{decouples physical safety} from \textit{cyber security}, \ie it guarantees the safety of the physical system even when the security of the embedded system cannot be guaranteed. 
 The main idea of the paper is based on two key observations. \textit{First,} restarting a computing unit and reloading the software from a protected storage is an effective way to remove any malicious component from the system (albeit for a short, finite amount of time) and restore the software to an uncorrupted, functional condition~\cite{mhasan_resecure16}. \textit{Second,} physical systems, unlike computing systems, have \textit{inertia}. As a result, changing the state of a physical plant from a safe state to an unsafe one even with complete adversarial control\footnote{For instance if the adversary gains root/administrative access in the system.} is not instantaneous and takes considerable time (\eg increasing temperature, increasing pressure, \etc) We leverage
	this property to calculate a \textit{safe operational window} and combine it with the effectiveness of {\em system-wide
		restarts} to decrease the efficacy of malicious actors.

In our approach, the power of system-restarts is used to \textit{recover the embedded system from a potentially corrupted condition to a trusted and functional state}. At this point, all external interfaces~(\eg network and debugging interface) of the embedded system (except for the sensors and actuators) are disabled. While system is in this isolated mode, a safe operational window is calculated during which, even with full control of the adversary, the physical plant cannot reach an unsafe state. In this isolated mode the next restart time of the embedded system is also set dynamically. Then, the main controller and all interfaces are activated and system starts the normal operation until the restart time time is reached. This process repeats after each restart. However, if at some point, the plant is too close to the safety boundary such that calculated safety window is too short~(\ie system cannot tolerate any adversarial behavior), the secure/isolated operational mode is extended. During the extended period,  a \textit{safety controller (SC)} drives the plant towards the safe states and farther away from the unsafe boundary. Once the system has a reasonable margin, next restart time is set and normal controller is activated. In our design, \textit{an external isolated simple hardware module} acts as the \emph{Root of Trust~(RoT)}\footnote{The proposed RoT module can be constructed using simple commercial off-the-shelf~(COTS) components.} and triggers the restarts (Section \ref{sec:RoTDesign}). This guarantees that even very strong adversaries cannot interfere with the system restart events.



Notice that only restarting the embedded platform may make it easier for the attacker to penetrate the system again. However the dynamic calculation of restart times and other quantities (\eg reachability calculations, see Section \ref{sec:securetime}) makes it harder for the attacker to succeed. Along with restart, using the approaches described in Section \ref{sec:discussion}, it is possible to further increase the difficulty for the attacker to intrude and cause damage to the system.

Note that with optimizing the boot sequence and given the embedded nature of such systems, \textit{the time required to restart and reload the software can be very short}. The impact of the proposed approach on availability of the system varies according to the boot time of the particular embedded system~(type of the OS, processor, storage, size of the software, \etc) and the physical plant's dynamics~(that determines the rate of restarts).  Such an impact is negligible for systems with slower dynamics such as temperature control system and it can increase for unstable systems such as avionics (see Section~\ref{sec:evaluation}). Figure~\ref{fig:applicableFields} illustrates the impact of the proposed approach on some categories of applications, in an abstract sense.

\begin{figure}[ht]
	\begin{center}
		\includegraphics[width=0.35\textwidth]{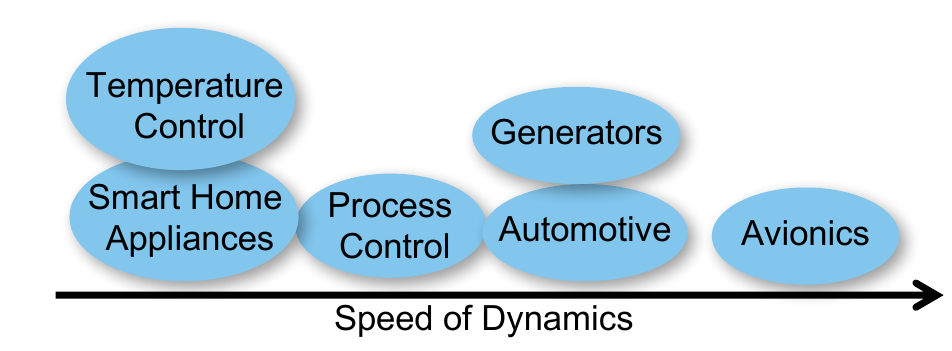}
		\caption{Impact of the proposed restart-based protection increases with the Speed of dynamics of the physical sub-system.}
		\label{fig:applicableFields}
	\end{center}
\end{figure}

In this paper we make following contributions:
\begin{itemize}
	\item We propose an approach to calculate the safe operation window for CPS and hence, improve its overall security. By utilizing system-wide restarts the proposed framework can recover a potentially compromised system. (Section~\ref{sec:probability}).
    \item We guarantee the safety of the physical system despite security threats (Section~\ref{sec:details}).
	\item  We demonstrate the effectiveness of our proposed method using a proof-of-concept implementation as applied to realistic systems (Section~\ref{sec:evaluation}).
\end{itemize}

\section{System and Adversary Model}
\label{sec:model}

\subsection{System Model}

The type of system that we consider in this work is an embedded control system that drives the physical plant. Examples of such systems include heating, ventilation and watering systems, smart home appliances, drone control systems, automotive engine controllers, process controllers, \etc  These systems usually provide an interface to interact with user or other physical systems. These interfaces may be utilized for transmitting the state and sensors values or receiving new set points and operation plans. In some cases, embedded platforms also provide debugging interfaces that can be used for diagnosing errors during operation or the development cycle. Furthermore, the physical plant has \textit{safety requirements} that need to be respected at \textit{all} times. Failure to satisfy the safety requirements may damage the physical components or surrounding environment. As an example, a helicopter/UAV may destabilized (or damaged) if its wings touches the ground or other objects.



\subsection{Adversary Model}

Embedded CPS face threats in various forms depending on the system and the goals of the attacker. For instance, adversaries may insert, eavesdrop on or modify messages exchanged by system components. Besides, attackers may manipulate the processing of sensor inputs and actuator commands, could try to modify the control flow of the system as well as extract sensitive information through side-channels. In this paper we make the following assumptions about the system and capabilities of the adversary:

\begin{enumerate}[\it i\normalfont )]
	
	\item \textit{Software image integrity:} We assume that the original image of the system software~(real-time operating system~(RTOS), control applications and other components) does not contain malicious components. These components, however, may contain bugs or security vulnerabilities that could be exploited to initiate attacks.
	
	\item \textit{Read-only memory unit}: We assume that the original image of the system software is stored on a read-only memory unit (\ie E$^2$PROM). This content is unmodifiable at runtime by adversary or other entities.

	\item \textit{Integrity of Root of Trust (RoT)}: As mentioned earlier, RoT is an isolated hardware module responsible for issuing a restart signal 
    at designated times. It is designed to be directly accessible \textit{only} during an internal that we call the SEI~(refer to Section~\ref{sec:RoTDesign}). Hence, we consider that attackers can not compromise it to prevent the system from restarting. 	
	
	\item \textit{Sensor and actuator integrity}: We assume that the adversary does not have physical access to the sensors and actuators. Hence, values reported by sensors are correct and commands sent to the actuators are executed accordingly. However, an attacker may corrupt and replace sensor measurements or actuator commands inside the applications, on the network or underlying OS.
	
	\item \textit{External attacks}: Attackers require an external interface such as the network, serial port or debugging interface to launch the attacks. As long as all such interfaces remain disabled, applications running on the system are assumed to be trusted. 

	\item \textit{Integrity violation}: We assume that once the external interfaces are activated, the adversary can compromise the software components\footnote{We are not concerned with exact method used by attackers to get into the system.} on the embedded system including the RTOS as well as the real-time/control applications. 

	
\end{enumerate}



Our approach also improves system security against forms of attack that have intentions other than damaging the physical plant. The following are some instances of such attacks \cite{cy_side_channel,mhasan_resecure16}: \ci \textit{Information leakage through side-channels}: The adversary may aim to learn important information through side or covert-channel attacks by simply lodging themselves in the system and extracting sensitive information. \cii \textit{System-level Denial of Service (DoS)}: The attacker may intrude into the real-time system and exhaust \textit{system-level} resources such as CPU, disk, memory, restricting safety-critical control tasks from necessary resources, \etc~ 
\ciii \textit{Control performance degradation}:
An attacker may focus on reducing the control performance and consequently reducing the progress of the system towards the mission goal by tampering with actuators/sensors and/or preventing the control processes from proper execution.

Our approach does not mitigate network attacks such as man-in-the-middle or DoS attacks that restrict network access. Also, the safety guarantees of the physical plant do not hold if the system is susceptible to sensor jamming attacks (\eg GPS jamming, electromagnetic interference on sensors \etc). 

%
%
%
%

\section{Background on Safety Controller}
\label{sec:background}

As we explain in Section \ref{sec:details}, a core component of our approach is the \textit{safety controller~(SC)}. Since the properties of the SC is essential for our approach, we now introduce this concept and illustrate a method to construct the SC. In the next section, we will use these concepts to establish the restart-based protection. For ease of understanding we first present some useful definitions before describing the details of SC design.

\begin{definition}[Admissible and Inadmissible States]
States that do not violate any of the operational constraints are referred to as \textit{admissible states} and denoted by $\mathcal{S}$. Likewise those that violate the constraints are referred to as \textit{inadmissible states} and denoted by $\mathcal{S}'$.
\end{definition}



\begin{definition}[Recoverable States]
\textit{Recoverable states}, a subset of the admissible states, such that if any given SC starts operations from one of those states, all future states will remain admissible. The set of recoverable states is denoted by $\mathcal{R}$.
\end{definition}



To implement our restart-based protection approach, one must construct an SC for the system and find its associated $\mathcal{R}$. For this, we provide a brief overview of constructing a SC introduced in earlier literature~\cite{seto1999case}\footnote{Any other design method for constructing an SC and the corresponding recoverable region~$\mathcal{R}$ with the above properties can be utilized for the purpose of the framework introduced in this paper.}. 
According to this approach, 
the SC is designed by approximating the system with linear dynamics in the form of $\dot{x} = Ax + Bu$, for state vector $x$ and input vector $u$. In addition
\emph{the safety constraints of the physical system are expressed as linear
constraints} in the form of $H \cdot x \leq h$ where $H$ and $h$ are constant matrix and vector. Consequently, the set of admissible states are $\mathcal{S} = \{x: H \cdot x \leq h \}$.

Safety constraints, along with the linear dynamics for the system are the inputs to a convex optimization problem. 
These parameters
produce both linear proportional controller gains $K$ as well as a positive-definite matrix $P$. The resulting linear-state feedback controller, $u = Kx$, yields closed-loop dynamics in the form of $\dot{x} = (A + BK)x$. Given a state $x$, when the input $u = Kx$ is used, the $P$ matrix defines a Lyapunov potential function $(x^TPx)$ with a negative-definite derivative. As a result, for the states where $x^TPx < 1$, the stability of the linear system is guaranteed using Lyapunov's
direct or indirect methods.
It follows that the states which satisfy $x^TPx < 1$ are a subset of the safety region. As long as the system's state is inside $x^TPx < 1$ and the $u = Kx$ is the controller, the physical plant will be driven toward the equilibrium point, \ie $x^TPx = 0$. Since the potential function is strictly decreasing over time, any trajectory starting inside the region $x^TPx < 1$ will remain there for an unbounded time window. 
As a result
no inadmissible states will be reached. 
Hence, the 
linear-state feedback controller $u = Kx$ is the SC and $\mathcal{R} = \{ x : x^TPx < 1\}$ is the recoverable region. Designing SC in such way ensures that the physical system would remain always safe \cite{Sha01usingsimplicity}.

\section{Restart-Based Protection}
\label{sec:details}


The goal of our proposed restart-based security approach is to \textit{prevent adversaries from damaging the physical plant}. The high-level overview of the proposed approach is illustrated in Fig.~\ref{fig:eventorder}.
The key idea is that, after a reboot (while the software is still uncompromised) the time instant to initiate the next restart must be decided such that the following conditions remain satisfied:

\begin{itemize}
	\item From the current time instant until the completion of the next reboot state will remain inside the admissible region even if the adversary is in full control and  
	\item  The state of the plant after completion of the next reboot will be such that the SC can stabilize the system. 
\end{itemize}

\begin{figure}[!tb]
	\begin{center}
		\includegraphics[width=0.40\textwidth]{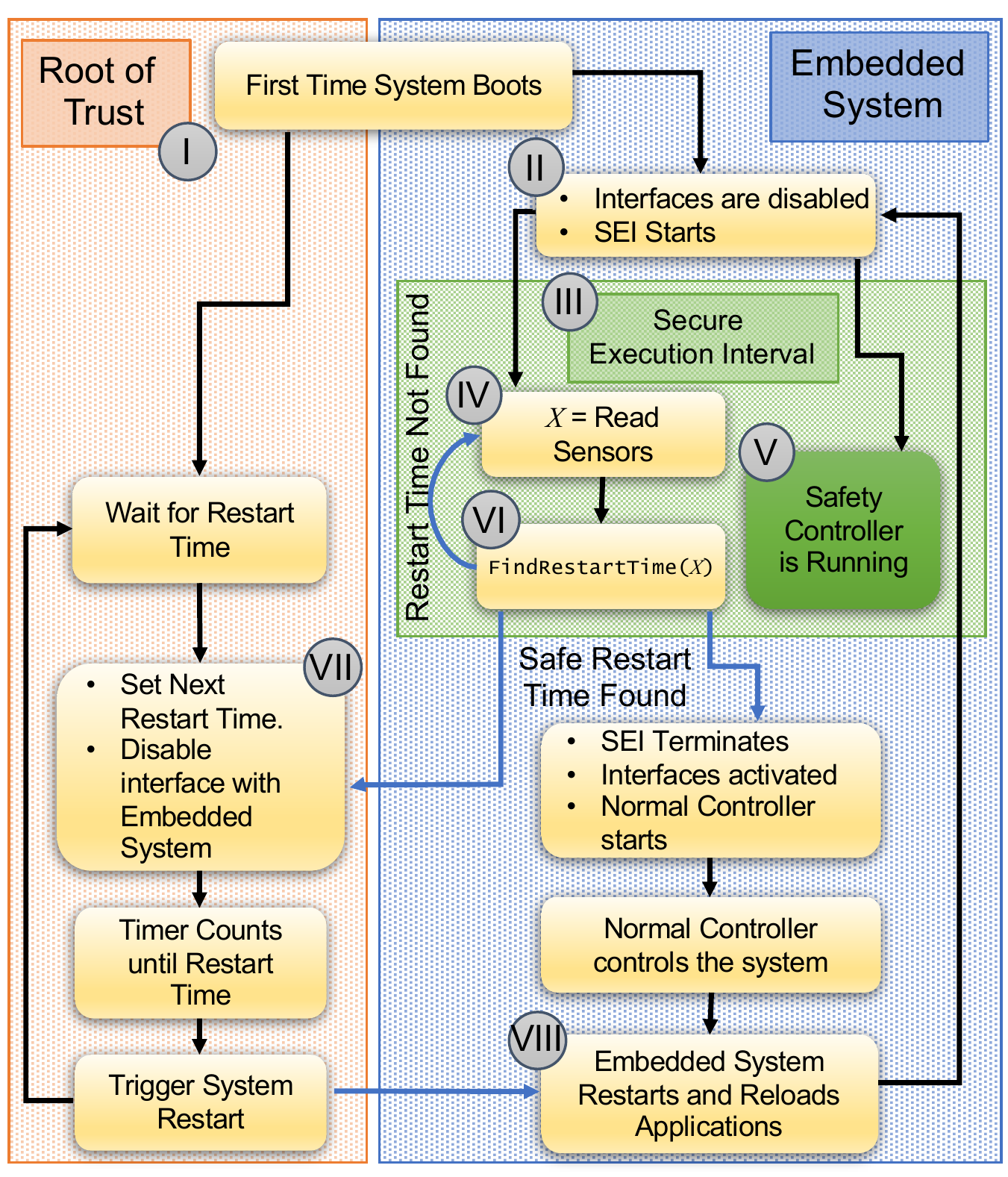}
		\caption{Sequence of the events.}
		\label{fig:eventorder}
	\end{center}
\end{figure}

After every restart (Marker VIII in Fig.~\ref{fig:eventorder}), the goal is to set the time of the next restart (Marker VII) such that it satisfies the above conditions (Marker VI). However, if the physical plant's state is too close to the boundary between the admissible and inadmissible states, there may not exist a time value that satisfies these conditions -- since, in such states, the system has no tolerance for unexpected behavior and if the adversary takes over, it can damage the plant in a very short time.
In such cases, the SC remains in charge of the control (illustrated by Marker V) for longer time and pushes the plant further into the set of admissible states $\mathcal{S}$ (refer to Section~\ref{sec:background}) until the safety margin is large enough and the restart time can be calculated~(system moves back and forth between states IV and VI until a restart time is available).

Three challenges need to be addressed for this approach to be practical. First, we need a \textit{mechanism to compute the next restart time after each reboot}. Second, we need to \textit{ensure the integrity of the process that performs this computation}. Third, once the next restart time is computed, \textit{a mechanism is needed to enforce its execution at designated time despite full adversarial control of the embedded system}. 

To address the first challenge, an approach for computing the safe restart time (Marker V in Fig.~\ref{fig:eventorder}) is presented in Section~\ref{sec:securetime}. To guarantee integrity of restart time computation~(second challenge), we introduce the idea of \textit{secure execution intervals~(SEI)} as illustrated by Marker III in Fig~\ref{fig:eventorder}. After every restart, for a short interval of time, all the external interfaces of the system remain disabled (Marker II). This allows the reliable execution of SC and for trusted computation of the restart time without adversarial interference. SEI is discussed in more detail in Section~\ref{sec:SEI}. Furthermore, to ensure that the intruder cannot prevent the system from restarting~(third challenge), the design includes a simple, isolated hardware module (illustrated by Marker I) called~\emph{Root of Trust~(RoT)} discussed in Section~\ref{sec:RoTDesign}.

It is useful to define some parameters before the detailed discussion on the components. In the rest of this section, $T_s$ is the length of the SEI and is fixed throughout operation, and $t$ is the current time. In addition, $T_r$ is the time it takes for the embedded system to restart and reboot.

\subsection{Secure Execution Interval~(SEI)}
\label{sec:SEI}

The need for SEIs arises from the fact that a mechanism is required to prevent any malicious interference towards the tasks that are necessary for providing the safety guarantees. After a restart, while all the external interfaces of the system remain disabled, the system software is loaded into memory from a read-only storage unit. Disabling interfaces isolates the system and protects it from intruders. So long as the system is isolated, we assume that the software is uncorrupted. SEI refers to this isolated interval of time after each restart. 

During the SEI, two tasks execute in parallel: SC (which runs periodically to keep the physical system stable) and \texttt{FindRestartTime} (Section~\ref{sec:securetime}). If \texttt{FindRestartTime} cannot find a safe restart time~(this may occur when the physical plant is very close to being in the  inadmissible region), SEI is extended for another $T_s$ time units\footnote{As we illustrate in Section~\ref{sec:evaluation}, this extension does not negatively affect the safety.}. Extending SEI gives SC more time to push the state further away from the unsafe boundary and into the admissible region. If \texttt{FindRestartTime} task is able to find a restart time, \ca the \texttt{SetRestartTime} interface of the RoT~(details in next section) is used to set the time for the next restart event, \cb SEI terminates, \cc all the interfaces are activated, \cd the main controller of the system is activated and \ce the  system enters normal operation mode -- until the next restart takes place. Note that following this procedure, the \texttt{SetRestartTime} interface of RoT will be called exactly \emph{once} before the SEI terminates.

\subsection{Hardware Root of Trust~(RoT)}
\label{sec:RoTDesign}

Our design requires a mechanism to ensure that under any circumstances, the adversary cannot prevent the system from restarting. Hence, we include an isolated HW module in charge of triggering restarts and refer to it as hardware root of trust~(RoT).

RoT provides a hardware interface, $\texttt{SetRestartTime}$, that during the SEI allows the processor to set the time of the next restart event.  To achieve this, after each restart, RoT allows  the processor to call the \texttt{SetRestartTime} interface only \emph{once}. Additional calls to this interface will be ignored. RoT immediately sets a timer for the time value received from \texttt{SetRestartTime} interface and issues a restart signal to the hardware restart pin of the platform upon its expiration.


In order to achieve the security goals of the platform, ROT needs to be secure and incorruptible. Hence, we require hardware isolation~(\eg a standalone timer) and independence from the processor with no connectivity except for the HW interface \texttt{setRestartTime}. In our prototype implementation, RoT is implemented using a simple micro-controller (Section~\ref{sec:evaluation}).

%
%

\subsection{Finding a Safe Restart Time}
\label{sec:securetime}



In this section, we now discuss the implementation of the \texttt{FindRestartTime} task. This task is activated immediately after system reboot, along with the SC task  and calculates the time for the next restart event.

Before we proceed, it is useful to define some notations. We use the notation of $\mathsf{Reach}_{=T}(x,C)$ to denote the set of states that are reachable by the physical plant from an initial set of states $x$ after exactly $T$ units of time have elapsed under the control of controller $C$~\cite{rt-reach}. $\mathsf{Reach}_{\leq T}(x,C)$ can be defined as $\bigcup_{t=0}^{T} \mathsf{Reach}_{=t}(x,C)$ \ie set of all the states reachable within up to $T$ time units. In addition, we use $SC$ to refer to the \emph{safety controller} and $UC$ to refer to an \emph{untrusted controller}, \ie one that is compromised by the adversary.

Conditions presented in our newly developed Theorem~\ref{thm:safety} evaluates whether the system remains safe and recoverable if restarted after $\delta_r$ time units.

\begin{theorem}
Assuming that $x(t)$ is the state of the system at time $t$ and the adversary controls the system~(UC) from $t$ until the restart of the system. Under the following conditions, the embedded system can be restarted at time $t + \delta_r$ thus guaranteeing that attacker cannot violate the safety requirements of the physical plant:
\begin{enumerate}
	\item $\mathsf{Reach}_{\leq \delta_r + T_r}(x, UC) \subseteq
	\mathcal{S}$\label{cond:hardconst_reach_pre_tc};
	\item $\mathsf{Reach}_{\leq T_\alpha}(\mathsf{Reach}_{= \delta_r + T_r}(x, UC), SC)
	\subseteq \mathcal{S}$\label{cond:hardconst_reach_pre_ts};
	\item$\mathsf{Reach}_{= T_\alpha }(\mathsf{Reach}_{= \delta_r+T_r}(x, UC), SC)
	\subseteq \mathcal{R}$\label{cond:hardconst_reach_ts}.
\end{enumerate}
\label{thm:safety}
\end{theorem}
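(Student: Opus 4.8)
The plan is to decompose the timeline following the restart at $t+\delta_r$ into three consecutive phases and, for each, to over-approximate the set of states the plant can occupy, showing it never leaves $\mathcal{S}$ and that it lands inside $\mathcal{R}$ by the end of the safety-controller phase — after which the recoverable-state property recalled in Section~\ref{sec:background} finishes the argument.

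First I would treat $[t,\,t+\delta_r+T_r]$ as a single adversarial window: from $t$ the untrusted controller UC drives the plant, and I would argue that it is sound to keep treating the plant as adversarially driven through the $T_r$ reboot interval as well, since nothing trustworthy executes until the reload from read-only memory has completed. The states visited during this window are exactly $\mathsf{Reach}_{\leq \delta_r+T_r}(x, UC)$, which Condition~\ref{cond:hardconst_reach_pre_tc} places inside $\mathcal{S}$; hence no inadmissible state is reached while the adversary holds control.

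Next, let $X := \mathsf{Reach}_{=\delta_r+T_r}(x, UC)$ be the set of plant states possible at the instant the reboot completes. At that instant the image is reloaded and the external interfaces are still disabled, so the software is trusted and the SC is in charge; I would take $X$ as the initial set for the SC. Then the states visited during the ensuing SEI of length $T_\alpha$ form $\mathsf{Reach}_{\leq T_\alpha}(X, SC)\subseteq \mathcal{S}$ by Condition~\ref{cond:hardconst_reach_pre_ts}, and the state at the end of the SEI lies in $\mathsf{Reach}_{= T_\alpha}(X, SC)\subseteq \mathcal{R}$ by Condition~\ref{cond:hardconst_reach_ts}. Once the plant is in $\mathcal{R}$ with the SC running, the defining property of recoverable states guarantees that every subsequent state remains admissible; to extend this to all time I would close an induction over restart cycles, using the fact that normal operation resumes only after \texttt{FindRestartTime} has chosen the next offset so that Conditions~\ref{cond:hardconst_reach_pre_tc}--\ref{cond:hardconst_reach_ts} again hold for the post-SEI state (and, if none exists, the SEI is extended while the SC pushes the state deeper into $\mathcal{R}$ toward the equilibrium $x^TPx=0$ until one does, which by the same property never exits $\mathcal{S}$).

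I expect the main obstacle to be the control hand-off together with the reboot interval: justifying formally that it is safe to fold the $T_r$ reboot window into the adversarial reach set rather than modeling the reboot dynamics separately, and that the nested-$\mathsf{Reach}$ composition is a genuine over-approximation — i.e. that $\mathsf{Reach}_{\leq T_\alpha}(\mathsf{Reach}_{=\delta_r+T_r}(x,UC), SC)$ captures every trajectory the plant can follow once the SC takes over from a state the adversary was free to steer it to. Everything else is essentially bookkeeping layered on top of the SC and $\mathcal{R}$ construction already in place.
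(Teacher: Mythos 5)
Your proposal is correct and follows essentially the same route as the paper's proof: the paper argues by contradiction with a case split on whether the first inadmissible state occurs before $\delta_r+T_r$, between $\delta_r+T_r$ and $\delta_r+T_r+T_\alpha$, or after, which is exactly your three-phase decomposition read contrapositively, and it closes the third case with the same recoverable-region property $\mathsf{Reach}_{\leq\infty}(\mathcal{R},SC)\cap\mathcal{S}'=\emptyset$ applied to the subset $\mathsf{Reach}_{=T_\alpha}(\mathsf{Reach}_{=\delta_r+T_r}(x,UC),SC)\subseteq\mathcal{R}$. Your additional remarks on folding the reboot window into the adversarial reach set and on the induction over restart cycles are sensible bookkeeping that the paper leaves implicit.
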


\begin{proof}
Intuitively, this condition says that plant remains safe if, \ca the adversary~(UC) cannot reach an inadmissible state before the next restart completes ($\delta_r + T_r$), \cb if the safety controller takes over after next restart, it will avoid unsafe states until $\delta_r + T_r + T_\alpha$ time units passes and \cc after $\delta_r + T_r + T_\alpha$ time, a state in $\mathcal{R}$ will be safely reached.

To prove formally, assume by contradiction that the system is not safe under these conditions. Hence, an inadmissible state is reached at some time. This time will be either less than $\delta_r + T_r$, more than $\delta_r + T_r$ and less than $\delta_r + T_r + T_\alpha$ or more than $\delta_r + T_r + T_\alpha$. The first two of these cases are ruled out directly by conditions (1) and (2), so only the third case needs to be examined.

From properties of SC, $\mathsf{Reach}_{\leq \infty}(\mathcal{R}, SC) \cap \mathcal{S}' = \emptyset$. Since if $\mathcal{R}' \subseteq \mathcal{R}$, $\mathsf{Reach}_{\leq \infty}(\mathcal{R}', SC) \subseteq \mathsf{Reach}_{\leq \infty}(\mathcal{R}, SC)$, the smaller set of states $\mathcal{R}' = \mathsf{Reach}_{= T_\alpha }(\mathsf{Reach}_{= \delta_r+T_r}(x, UC), SC) \subseteq \mathcal{R}$ will also satisfy the condition $\mathsf{Reach}_{\leq \infty}(\mathcal{R}', SC) \cap \mathcal{S}' = \emptyset$. Therefore, every state reached after $\delta_r + T_r + T_\alpha$ will be admissible\footnote{This proof is adopted from work~\cite{rt-reach}.}.
\end{proof}

As a direct result of this theorem, after each restart the safety controller remains in charge until a $\delta_r$ is found that satisfies the above conditions. Algorithm~\ref{alg:find_restart_time} shows the pseudo-code of system operation. 
Lines~\ref{line:startoffind} to~\ref{line:endoffind} represent the $\texttt{FindRestartTime}(x)$ function which uses the conditions of Theorem~\ref{thm:safety} to find a safe restart time.

\begin{algorithm}[ht]
	
	\newcommand{\algorithmicbreak}{\textbf{break}}
	\newcommand{\BREAK}{\STATE \algorithmicbreak}
	\renewcommand\algorithmiccomment[1]{%
		{\it /* {#1} */} %
	}
	\renewcommand{\algorithmicrequire}{\textbf{Input:}}
	\renewcommand{\algorithmicensure}{\textbf{Output:}}
	
	\begin{algorithmic}[1]
		\begin{footnotesize}
			\STATE Start SC. ~~\COMMENT{SEI begins}\label{line:begin}
			\STATE 
            $\mathtt{timeFound} := $  \textbf{False} 
			\WHILE{$\mathtt{timeFound} ==$ \textbf{False}}
			
			\STATE $x:=$ most recent state of the system from Sensors \label{line:startoffind}

			\STATE 
  $\mathtt{\delta_{\text{candidate}}} := \delta_{\text{init}}$ ~~\COMMENT{initialize the restart time}

			\STATE $\mathtt{startTime}$ = currentTime()
			\WHILE{currentTime() - $\mathtt{startTime}$ \textless $T_s$}
			\IF {conditions of Theorem~\ref{thm:safety} are true for $\delta_{\text{candidate}}$} \label{line:thm}
			\STATE $\delta_{\text{safe}} := \delta_{\text{candidate}}$
			\STATE $\delta_{\text{candidate}}:= \delta_{\text{candidate}} + \mathsf{INC\_STEP}$ ~\COMMENT{increase the $\delta_{\text{candidate}}$}
			\ELSE 
			\STATE $\delta_{\text{candidate}}:= \delta_{\text{candidate}} - \mathsf{INC\_STEP}$
			~\COMMENT{decrease the $\delta_{\text{candidate}}$}
			\ENDIF
			\ENDWHILE		\label{line:endoffind}	
			
			\IF{$\delta_{\text{safe}} >$ currentTime() - startTime} \label{line:elapsedTimeStart}
			\STATE$\delta_{\text{safe}} = \delta_{\text{safe}} - $(startTime = currentTime())			
			\STATE $\mathtt{timeFound} := $ \textbf{True}			
			\ENDIF \label{line:elapsedTimeEnd}

			\ENDWHILE
			\STATE Send $\delta_{\text{safe}}$ to RoT. ~~\COMMENT{Set the next restart time.}	
			
			\STATE Activate external interfaces.~~\COMMENT{SEI ends.}
			\STATE Terminate SC and start the main controller.
			
			\STATE When RoT sends the restart signal to hardware restart pin:
            \STATE \hspace*{1em} Restart the system 
            \STATE \hspace*{1em} Repeats the procedure from beginning (\eg from Line~\ref{line:begin})
			
			
		\end{footnotesize}
		
	\end{algorithmic}
	\caption{Pseudo-code of system operation}
	\label{alg:find_restart_time}
\end{algorithm}

%
%
%
%
%
%
%

Note that the computation time of each round of \texttt{FindRestartTime} task (Lines~\ref{line:startoffind} to~\ref{line:endoffind}) is capped by $T_s$ so that it samples the state frequently enough. In addition, evaluation of the conditions of Theorem~\ref{thm:safety} in Line~\ref{line:thm} requires finite time and is not instantaneous. To adjust for it, the elapsed time is deducted from the computed time~(Lines~\ref{line:elapsedTimeStart} to~\ref{line:elapsedTimeEnd}) when the restart time is being set in RoT.

It is worth mentioning that in this work, the run-time computation of reachable states~(\ie $\mathsf{Reach}$ function) that is used in evaluation of the Theorem~\ref{thm:safety} conditions, is performed using the real-time reachability technique that is proposed earlier~\cite{rt-reach}. This approach uses the model of the system dynamics~(linear or non-linear) to compute reachability. Since the real actions of the adversary at run-time are unknown reachability of the system under \emph{all} possible control values is calculated (compute reachability under $UC$). As a result, the reachable set under $UC$ is the largest set of states that might be reached from the given initial state, \textit{within the specified time}.

\section{Security Improvement By Restarting on Fixed Periods}
\label{sec:probability}

The approach discussed thus far, aims to achieve a very strong goal for the safety and security of CPS. However, trying to reach such strong protection guarantees for systems with fast-moving dynamics or narrow security margin~(\eg UAVs and power grid components), may result in very short restart times. Because, such systems can quickly become unstable. With such short restart times, actions such as establishing Internet-based communication, authenticating with a remote server, or performing computationally heavy tasks may become infeasible. In this section, our goal is to demonstrate that restarting the system, even when the restart time is not strictly calculated using the approach of Section~\ref{sec:securetime}, may improve the security for many systems. 

Assume $\mathcal{P}(t)$ is the \emph{probability of successful damage} that answers to the following question: "What is the probability that an attacker succeeds in damaging the system within certain $t$ time units?". In this context, \emph{damaging the system} refer to the act of forcing the plant to reach a state in the inadmissible region. Here, $t=0$ is the start of system operation from a functional state. The \emph{expected time of damage} is $E^{\mathcal{P}}(t_{\text{attack}}) = \int_0^\infty \tau \mathcal{P}(\tau).d\tau$. A countermeasure is assumed to \emph{improve the security} of the system if $E^{\hat{\mathcal{P}}}(t_{\text{attack}}) < E^{\mathcal{P}}(t_{\text{attack}})$ where $\hat{\mathcal{P}}(t)$ and $\mathcal{P}(t)$ are the probability of successful damage with and without that countermeasure in place. In the rest of this section, we will provide an intuitive discussion of the types of systems for which periodic restarting of the system improves the security \ie reduces the expected time of attack.

To proceed with the discussion, we assume that $\mathcal{F}(t)$ is the probability density function~(PDF) of the successful attack on the system whose value at any given time represents the relative likelihood of an attacker being able to cause damage to the physical system. The relationship between the probability of a successful attack and its PDF is  $\mathcal{P}(t) = \int_0^{t}\mathcal{F}(\tau)\cdot d\tau$.  Further, let us assume that $\hat{\mathcal{F}}(t)$ is the PDF of a successful attack on the same system when it is restarted periodically every $\delta_r$ times.
We have the following: 
\begin{equation}
\hat{\mathcal{F}}(t) = (1-\hat{\mathcal{P}}(k\delta_r))\mathcal{F}(t-k\delta_r)
\label{eq:resetPDF}
\end{equation}
where $k = \lfloor t/\delta_r \rfloor$. The term $(1-\hat{\mathcal{P}}(k\delta_r))$ is the probability of an attack not taking place in the previous $k$ restart cycles and, $\mathcal{F}(t-k\delta_r)$ is the first $\delta_r$ time units of function $\mathcal{F}$. Integrating over Equation~\ref{eq:resetPDF} results in Equation~\ref{eq:resetProb} for the probability of successful attacks with the restart-based protection.
\begin{equation}
\hat{\mathcal{P}}(t) = \hat{\mathcal{P}}(k\delta_r) + (1-\hat{\mathcal{P}}(k\delta_r))\mathcal{P}(t-k\delta_r)
\label{eq:resetProb}
\end{equation}

From equations~\ref{eq:resetPDF} and~\ref{eq:resetProb}, we can see that  $\hat{\mathcal{F}}$ and $\hat{\mathcal{P}}(t)$  are only a function of the values of $\mathcal{F}$~( and consequently $\mathcal{P}$) within the first $\delta_r$ time units.  The effectiveness of the restart-based approach on reducing probability of damage $\hat{\mathcal{P}}$ depends on two main factors: (1) distribution of probability density of the $\mathcal{F}$ within the starting $\delta_r$ segment. In other words, smaller density of $\mathcal{F}$ within the first $\delta_r$ time units leads to a smaller growth rate of $\hat{\mathcal{P}}$ which eventually results in a smaller $E^{\hat{\mathcal{P}}}(t_{\text{attack}})$. And, (2) the second factor is the growth rate of the original $\mathcal{P}$ in the time $t > \delta_r$~(\ie how does the probability of the attack without restarts is divided over time). 

We further clarify this analysis with the use of four illustrative systems in the following. Figure~\ref{fig:fig11} depicts the PDF~($\mathcal{F}$, blue line) and probability of damage function~($\mathcal{P}$, yellow line) for a demonstrative system. It also depicts the new damage PDF~($\hat{\mathcal{F}}$, red line) and probability of damage function~($\hat{\mathcal{P}}$, purple line) after the periodic restarts with a period of $20$ time units is performed on the system. Notice that, restarting reduces the probability of damage over the time for this system~($\hat{\mathcal{P}} < \mathcal{P}$). Figure~\ref{fig:fig12} depicts the same functions for a different system where the original PDF has a smaller density within the first $20$ time units~($\mu$ of the normal function in Figure~\ref{fig:fig12} is larger than Figure~\ref{fig:fig11}). As a result of this difference, $\hat{\mathcal{F}}$ is overall smaller in Figure~\ref{fig:fig12} which reduces the growth rate of $\hat{\mathcal{P}}$ in Figure~\ref{fig:fig12} compared to~\ref{fig:fig11}. This example illustrates that periodic restart-based protection is more effective on the systems where the probability of attack/damage is low within the first $\delta_r$ times of the operation.

\begin{figure}[h]
	\begin{center}
		\centering 
		\subfigure[$\mathcal{F}$ is a normal function with $\mu = 30$ time units $\delta = 40$. ]{\label{fig:fig11}\includegraphics[width=0.45\textwidth]{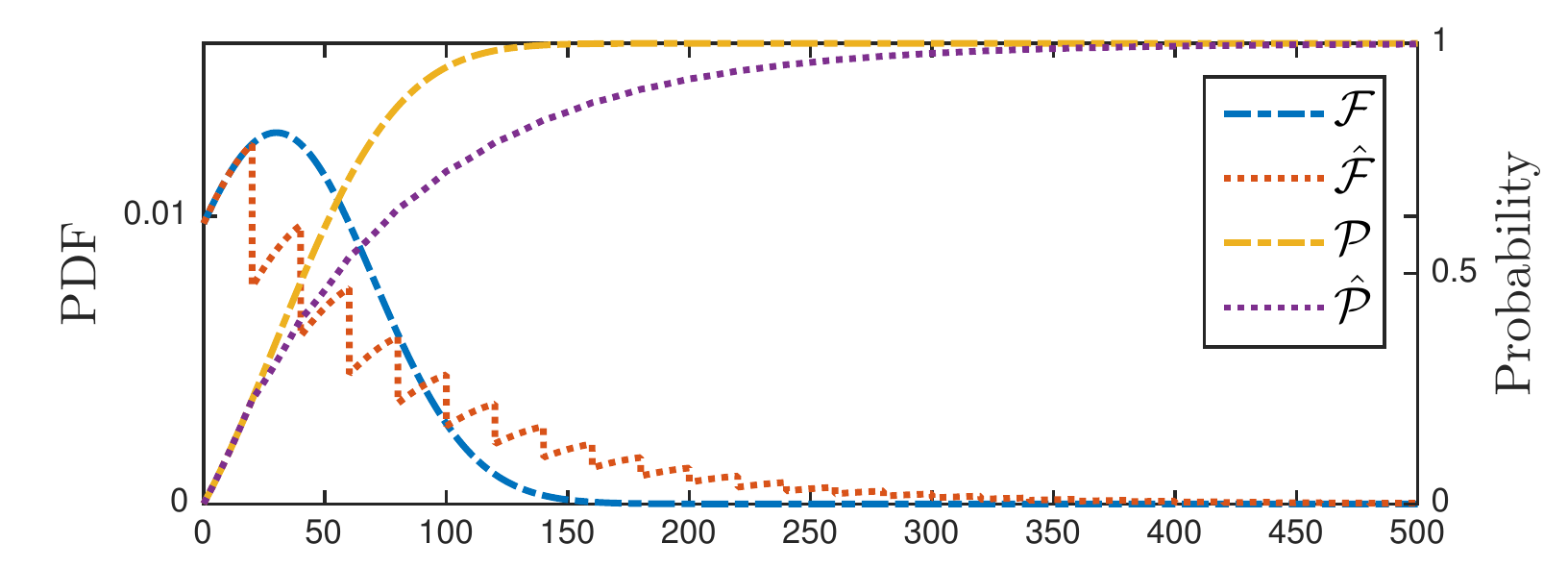}}
		\subfigure[$\mathcal{F}$ is a normal function with $\mu = 80$ time units and $\delta = 40$.]		{\label{fig:fig12}\includegraphics[width=0.45\textwidth]{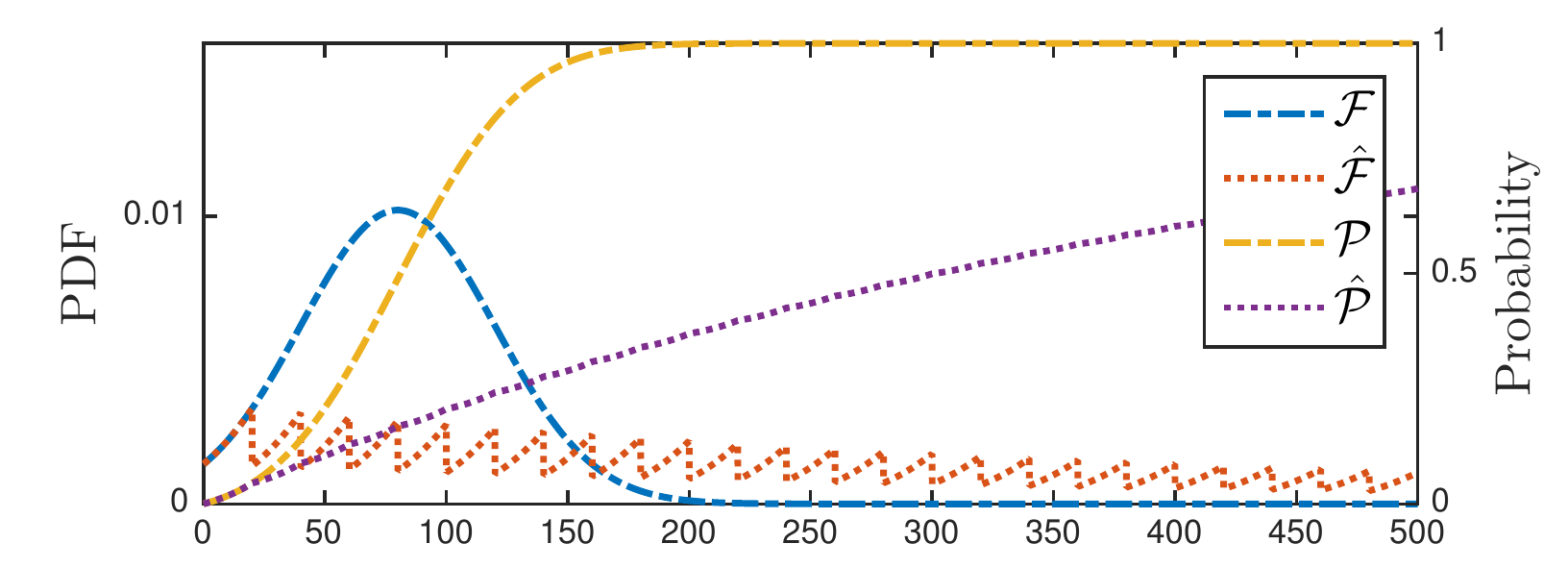}}
		\caption{In both figures, restart time is $20$ time units. Restart-based protection results in a smaller $\hat{\mathcal{P}}$ in the system of Figure~\ref{fig:fig12}~(the purple line) because of the smaller density of function $\mathcal{F}$ within the first $20$ time units.}
		\label{fig:resetPDF1}
	\end{center}
\end{figure}

Figures~\ref{fig:fig21} and~\ref{fig:fig22} illustrate the $\mathcal{F}$,$\mathcal{P}$, $\hat{\mathcal{F}}$, and $\hat{\mathcal{P}}$ functions for two systems where $\mathcal{F}$ has the exact same values over the first $20$ time units in both systems. Notice that, the damage probability function with restarting~($\hat{\mathcal{P}}$) is the same in both systems. However, for the system of Figure~\ref{fig:fig21}, the second major attack probability occurs later than the system of Figure~\ref{fig:fig22}. As seen in \ref{fig:fig21}, function $\hat{\mathcal{P}}$ has larger values than $\mathcal{P}$ in the range of 100 to 380 time units. This example illustrates a system where periodic restart-based approach does not improve the security of the system. 

\begin{figure}[h]
	\begin{center}
		\centering 
		\subfigure[$\mathcal{F}$ consists of two normal functions with mean at  $30s$ and $350s$ and a standard deviation of 40. ]{\label{fig:fig21}\includegraphics[width=0.45\textwidth]{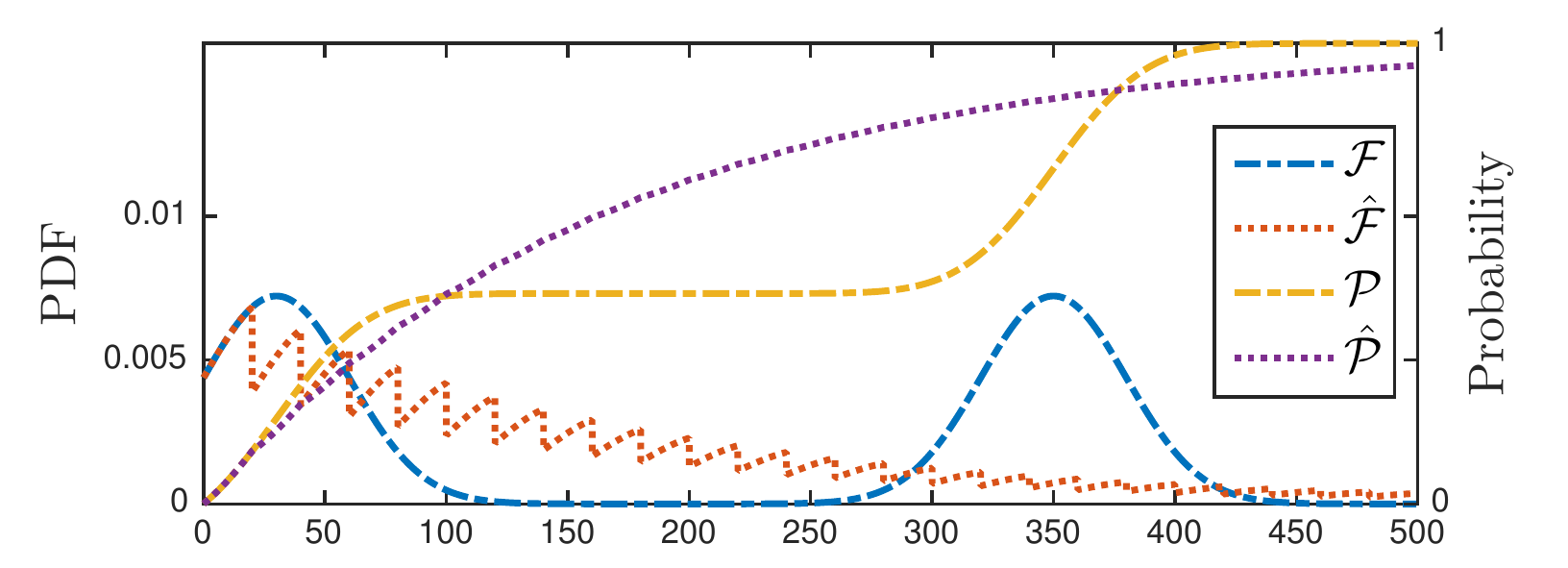}}
		\subfigure[$\mathcal{F}$ consists of two normal functions with mean at  $30s$ and $150s$ and a standard deviation of 40.]		{\label{fig:fig22}\includegraphics[width=0.45\textwidth]{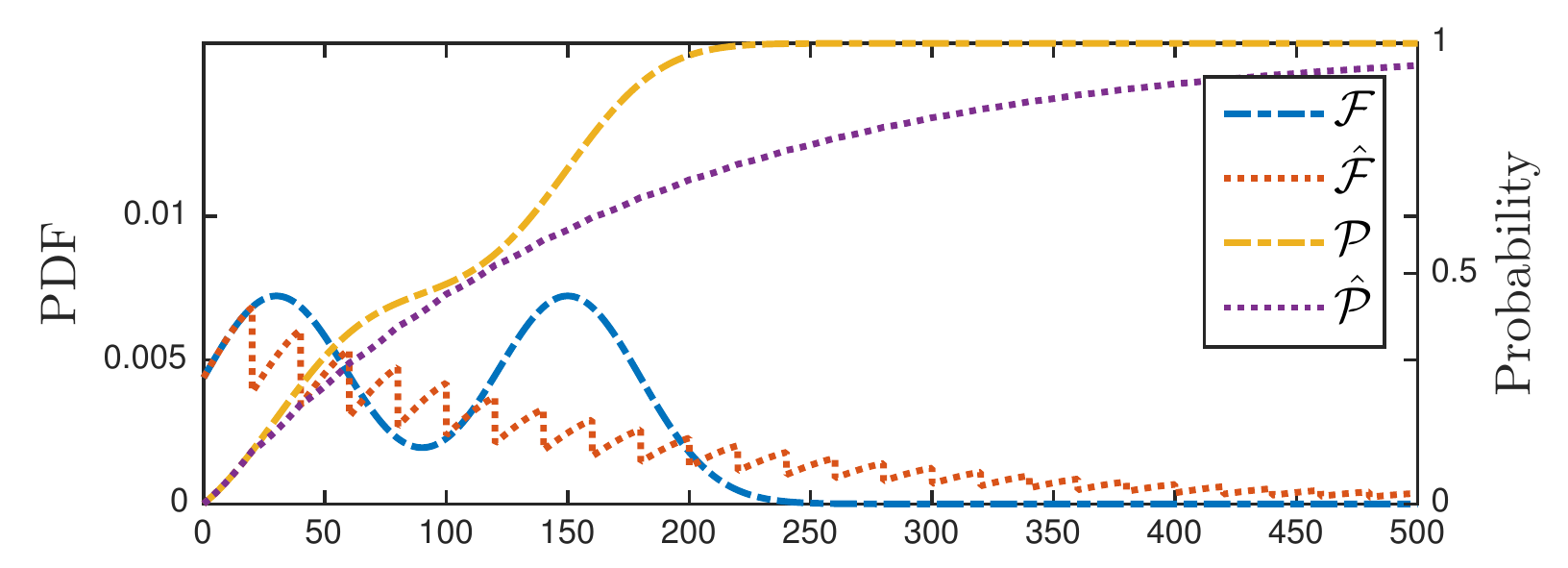}}
		\caption{In both figures, restart time is $20s$ and the density of $\mathcal{F}$ within the first 20 seconds is very similar. Restart based protection is more effective in reducing $\hat{\mathcal{P}}$ in the system of Figure~\ref{fig:fig22}~(the purple line).}
		\label{fig:resetPDF2}
	\end{center}
\end{figure}

In summary, equations~\ref{eq:resetPDF} and~\ref{eq:resetProb} provide an analytical framework for system designers to utilize $\mathcal{F}$ and $\mathcal{P}$ functions~(that are properties of the particular system) and evaluate the effectiveness of restart-based protection for the target systems.  Notice that, there is an extensive body of work in the literature that studied  probabilistic risk assessment of security threats for various domains and platforms~\eg \cite{schneier1999attack,Arnold2014,kumar2015time,5399279} . Those results can be utilized to construct $\mathcal{F}$ and $\mathcal{P}$ functions for the target systems.

\section{Evaluation}
\label{sec:evaluation}

We now evaluate various aspects of restart-based protection approach using a combination of simulation-based and an actual implementation. To evaluate practicality, we compare the safe restart times for two target systems, a 3-degree of freedom helicopter~\cite{3dof} and a warehouse temperature system~\cite{trapnes2013optimal}. The former is a very unstable system with fast-moving dynamics and the latter is a relatively stable system with slower dynamics. In addition, we have implemented a controller on a commercial-off-the-shelf~(COTS) ZedBoard~\cite{zboarddatasheet} embedded platform for a realistic 3DOF helicopter. We perform platform attacks on the embedded system and empirically demonstrate that the helicopter remains safe and recovers from the attacks.

\subsection{Physical Systems Under Evaluation}

\subsubsection{3-Degree of Freedom Helicopter:} 3DOF helicopter~(displayed in figure~\ref{fig:3dof}) is a simplified helicopter model, ideally suited to test intermediate to advanced control concepts and theories relevant to real-world applications of flight dynamics and control in the tandem rotor helicopters, or any device with similar dynamics~\cite{3dofhelicopter}. It is equipped with two motors that can generate force in the upward and downward direction, according to the given actuation voltage. It also has three sensors to measure elevation, pitch and travel angle as shown in Figure~\ref{fig:3dof}. We use the linear model of this system obtained from the manufacturer manual~\cite{3dofhelicopter} for constructing the safety controller and calculating the reachable set in run-time. Due to the lack of space, the details of the model are included in our technical report~\cite{githubrepoemsoft}.

For the 3DOF helicopter, the safety region is defined in such a way that the helicopter fans do not hit the surface underneath, as shown in Figure~\ref{fig:3dof}. The linear inequalities describing the safety region are provided are $-\epsilon + |\rho| \leq 0.3$ and $|\rho| \leq \pi/4$. Here, variables $\epsilon$, $\rho$, and $\lambda$ are the elevation, pitch, and travel angles of the helicopter. Limitations on the motor voltages of the helicopter are $|v_l| \leq 1.1$ and $|v_r| \leq 1.1$ where $v_l$ and $v_r$ are the voltage for controlling left and right motors.

%

\subsubsection{Warehouse Temperature System:}
This system consists of a warehouse room 
with a direct conditioner system~(heater and cooler) to the room and another conditioner in the floor. The safety goal for this system is to keep the temperature of the room within the range of $[20^{\circ}C,30^{\circ}C]$. This system and its model are obtained from the work in~\cite{trapnes2013optimal}. Equations~\ref{eq:heatermodel1} and~\ref{eq:heatermodel2} describe the heat transfer between the heater and floor, floor and the room, and room and outside temperature. The model assumes constant mass and volume of air and heat transfer only through conduction.

\begin{equation}
\dot{T}_F = -\frac{U_{F/R} A_{F/R}}{m_FCp_F}(T_F-T_R) + \frac{u_{H/F}}{m_FCp_F}
\label{eq:heatermodel1}
\end{equation}
\begin{equation}
\dot{T}_R = -\frac{U_{R/O} A_{R/O}}{m_RCp_R}(T_R-T_O) + \frac{U_{F/R}A_{F/R}}{m_RCp_R}(T_F-T_R) + \frac{u_{H/R}}{m_RCp_R}
\label{eq:heatermodel2}
\end{equation}
Here, $T_F$, $T_R$, and $T_O$ are temperature of the floor, room and outside. $m_F$ and $m_R$ are the mass of floor and the air in the room. $u_{H/F}$ is the heat from the floor heater to the floor and $u_{H/R}$ is the heat from the room heater to the room both of which are controlled by the controller. $Cp_F$ and $Cp_R$ are the specific heat capacity of floor~(in this case concrete) and air. $U_{F/R}$ and $U_{R/O}$ represent the overall heat transfer coefficient between the floor and room, and room and outside.

For this experiment, the walls are assumed to consist of three layers; the inner and outer wall are made of oak, and isolated with rock wool in the middle. The floor is assumed to be quadratic and consists of wood and concrete. The parameters used are as following\footnote{For the details of calculation of $U_{F/R}$ and $U_{R/O}$ and the values of the parameters refer to Chapter 2 and 3 of~\cite{trapnes2013optimal}.}: $U_{R/O} = 539.61J/hm^2K$, $U_{F/R} = 49920 J/hm^2K$ , $m_R = 69.96kg$, $m_F = 6000kg$, floor area $A_{F/R} = 25 m^2$, wall and ceiling area $A_{R/O} = 48 m^2$, thickness of rock wool, oak and concrete in the wall and floor respectively $0.25m$, $0.15m$ and $0.1m$. Maximum heat generation capacity of the room and floor conditioner is respectively $800J/s$ and $115J/s$. And, maximum cooling capacity of the room and the floor cooler is $-800J/s$ and $-115J/s$.

\subsection{Safe Restart Time for the Physical Systems}

\label{sec:restarttimeeval}


As discussed in the previous sections, after each system reboot the time of the next restart of the embedded system needs to be computed. The main factor that impacts the safe restart time is the proximity of the current state of the plant to the boundaries of the inadmissible states. In this section, we demonstrate this point on two physical systems with fast and slow dynamics; 3DOF helicopter and warehouse temperature system.


In figures~\ref{fig:exp1} and~\ref{fig:exp2}, the calculated safe restart times are plotted for the two systems under investigation. In these figures, the red region represents the inadmissible states and the plant must never reach to those states. If the plant is in a state that is marked green, it is still undamaged. However, at some future time it will reach an inadmissible state and the safety controller may not be able to prevent it. This is because physical actuators have a limited range and the maximum capacity of the actuators is not enough to cancel the momentum and prevent the plant from reaching the unsafe states. And the gray/black area is the region in which a value for the safe restart time of the system can be computed. In this region, the darkness of the color indicates the value of calculated restart time if the plant is in that state. The black points indicate the maximum time and the white points are an indicator of zero time.

\begin{figure}[ht]
	\begin{center}
		\centering 
		\subfigure[Projection of the state space into the plane $\dot{\epsilon} = 0$, $\dot{\rho} = 0$, $\lambda = 0$, and $\dot{\lambda} = 0.3\text{Radian}/s$]
			{\label{fig:exp11}\includegraphics[width=0.22\textwidth]{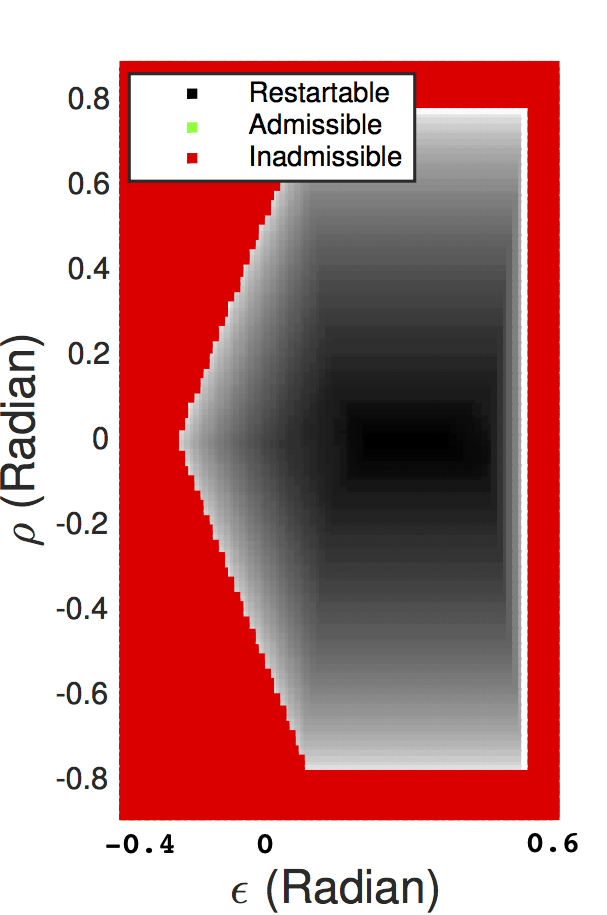}}
		\hfil
		\subfigure[Projection of the state space into the plane  $\dot{\epsilon} = -0.3\text{Radian}/s$, $\dot{\rho} = 0$, $\lambda = 0$, and $\dot{\lambda} = 0.3\text{Radian}/s$]
		{\label{fig:exp12}\includegraphics[width=0.22\textwidth]{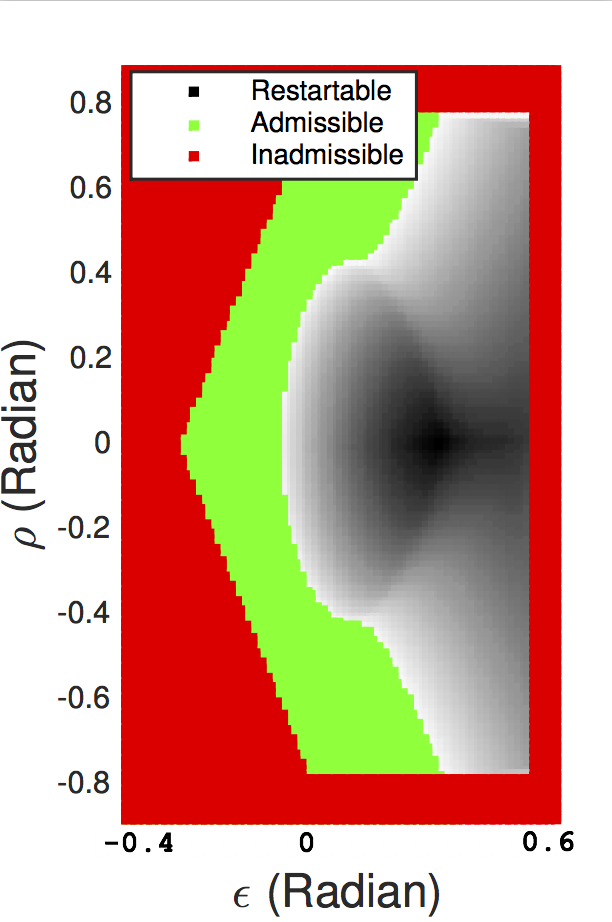}}
		\caption{Calculated safe restarting time for the 3DOF helicopter system from various states. Darkest black points represent a possible restart time of 1.23 seconds.}
		\label{fig:exp1}
	\end{center}
\end{figure}

In Figure~\ref{fig:exp1}, for 3DOF helicopter, the maximum calculated safe restart time~(\ie the darkest points in the gray region) is $1.23$ seconds. As seen in the Figure, restart time is maximum in the center where it is farthest away from the boundaries of unsafe states. In Figure~\ref{fig:exp12}, the angular elevation speed of the helicopter is $\dot{\epsilon} = -0.3\text{Radian}/s$. This indicates that the helicopter is heading towards the surface with the rate of 0.3 Radian per second. As a result, the plant cannot be stabilized from the lower elevations levels~(\ie the green region). Moreover, values of the safe restart times are smaller in Figure~\ref{fig:exp12} compared to the Figure~\ref{fig:exp11}. Because, crashing the 3DOF helicopter with the initial downward speed requires less time than the case without any downward angular velocity.

\begin{figure}[ht]
	\begin{center}
		\centering 
		\subfigure[Projection of states to $T_F = 25ºC$]
		{\label{fig:exp21}\includegraphics[width=0.22\textwidth]{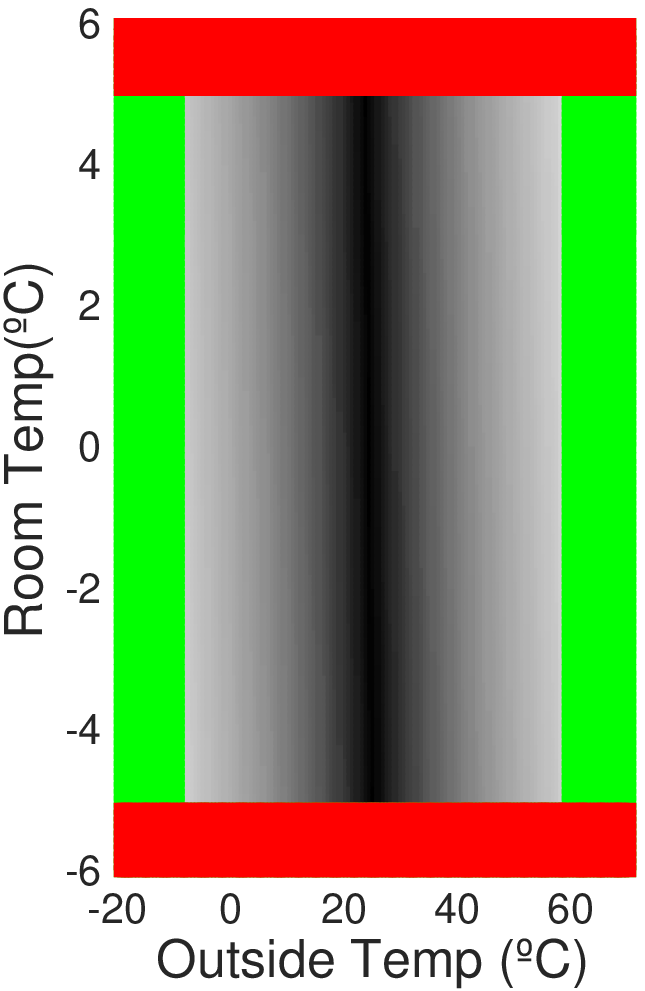}}
		\subfigure[Projection of states to $T_F = 29ºC$]
		{\label{fig:exp22}\includegraphics[width=0.22\textwidth]{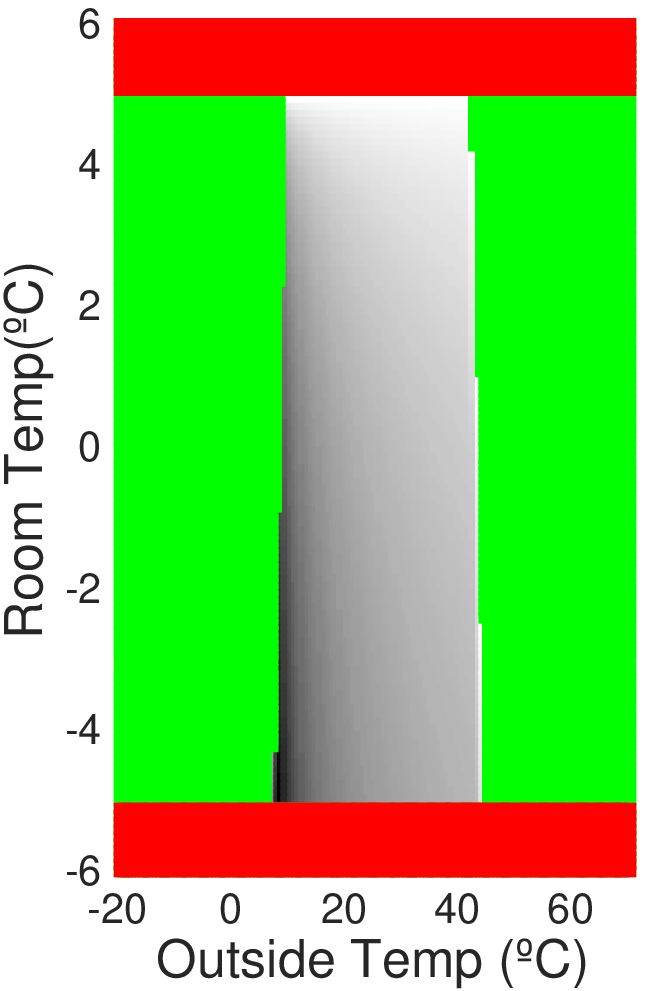}}
		\caption{Safe restarting times calculated for the embedded controller system of warehouse temperature.}
		\label{fig:exp2}
	\end{center}
\end{figure}

Figures~\ref{fig:exp21} and~\ref{fig:exp21}, plot the calculated safe restart times for the warehouse temperature system. For this system, when the outside temperature is too high or too low, the attacker requires less time to take the temperature beyond or bellow the safety range. One major difference of this system with the 3DOF helicopter is that due to the slower dynamics, the safe restart times have values up to $6235s$~(close to one hour and forty minutes). In the warehouse system, the rate of the change of the temperature even when the heater/coolers run at their maximum capacity is slow and hence, attacker needs more time to take the system to unsafe states.


If above systems are restarted at times smaller than the calculated safe restart windows, the physical plant is guaranteed to remain safe under an attacker who gains full control immediately after SEI ends. Achieving this goal is very reasonable for systems such as the warehouse temperature control and many other similar systems with applications in Internet of Things~(IoT), city infrastructure, and \etc. For systems with faster and more unstable dynamics such as the helicopter, the calculated restart times might become very short. The fast restart rate, even though theoretically possible, in practice may create implementation challenges. For instance, initializing a connection with the remote user over Internet may require more time than the calculated safe restart time. For our particular 3DOF controller implementation~(Section~\ref{sec:implementation}), we use serial port for communication which has a very small initialization overhead. Despite short restart times, we demonstrate that the system remains stable and functional.


\subsection{Restarting and  Embedded System Availability}
During the restart, embedded system is unavailable and actuators keep executing the last command that was received before the restart occurred. Results of Theorem~\ref{thm:safety} guarantees that such unavailability does not compromise safety and stability of the physical plant. However, low availability may slow down the progress of the system towards its control mission. In this section, we measure the average percent of the time that the embedded system is available when restart-based protection is implemented. The approach is as follows.

Availability of the controller is the ratio of the time that the main controller is running~(not rebooting and not in SEI) to the whole operation time of the system. In every restart cycle, availability is $ (\delta_r)/(\delta_r + T_s + T_r)$ where $\delta_r$ is the safe restart time~(main controller is active during $\delta_r$), $T_s$ is the length of SEI, and $T_r$ is the reboot time of the embedded platform. Value of $\delta_r$ is not fixed and changes according to the state of the plant. To measure availability, we divide the state space into three sub-regions based on how often the physical plant is expected to operate in those regions.  The availability values computed for the states in the most common region are assigned a weight of \emph{1}, second most common region a weight of \emph{0.5} and the least common region a weight of \emph{0.3}. We calculated the expected availability of the controller for all the states in the operational range of the system.  At the end, the weighted average of the availability was calculated according to the aforementioned weights. For the value of the $T_r$, we used $390ms$ for the reboot time of the 3DOF embedded controller~(Measured for the ZedBoard~\cite{zboarddatasheet} used in the implementation of the controller in next section) and $10s$ 
for reboot time of the temperature controller embedded system~(we assumed platform with embedded Linux OS). The ranges for the regions and obtained availability results are presented in Table~\ref{table:availability}. 

From these results, the impact of our approach on the temperature control system is negligible. However, there is a considerable impact on the availability of the helicopter controller due to frequent restarts. Notice that, the helicopter system is among the most unstable systems and therefore, one of the most challenging ones to provide \emph{guaranteed} protection. As a result, the calculated results for the helicopter system can be considered as an approximated upper bound on the impact of our approach on controller availability among all the systems.In the next section, we use a realistic 3DOF helicopter and demonstrate that, despite the reduced the availability, the plant remains safe and stable and makes progress.

\begin{table}[]
	\centering
	\begin{tabular}{c|c|c|}
		\cline{2-3}
		& \cellcolor[HTML]{C0C0C0}\begin{tabular}[c]{@{}c@{}}Regions (From most\\ common to least Common)\end{tabular} & \cellcolor[HTML]{C0C0C0}Avail. \\ \hline
		\multicolumn{1}{|c|}{\cellcolor[HTML]{C0C0C0}}                                                                                         & 15\textless $T_O$ \textless 40                                                                                                            &                                \\ \cline{2-2}
		\multicolumn{1}{|c|}{\cellcolor[HTML]{C0C0C0}}                                                                                         & 0\textless $T_O$ \textless 15 or 40 \textless $T_O$ \textless 60                                                                                                           &                                \\ \cline{2-2}
		\multicolumn{1}{|c|}{\multirow{-3}{*}{\cellcolor[HTML]{C0C0C0}\begin{tabular}[c]{@{}c@{}}Temperature\\ Control\\ System\end{tabular}}} & $T_O$ \textless 0 or 60 \textless $T_O$                                                                                                            & \multirow{-3}{*}{\%99.9}       \\ \hline
		\multicolumn{1}{|c|}{\cellcolor[HTML]{C0C0C0}}                                                                                         & $-\epsilon + |\rho| <$  0.1 \& $\epsilon < 0.2$ \& $|\rho| < \pi/8 $                                                                                                           &                                \\ \cline{2-2}

		\multicolumn{1}{|c|}{\cellcolor[HTML]{C0C0C0}}                                                                                         & 0.2$< -\epsilon + |\rho| <$  0.1 \& $0.2<\epsilon < 0.3$                                                                                         &                                \\
		\multicolumn{1}{|c|}{\cellcolor[HTML]{C0C0C0}}                                                                                         & \& $\pi/8 < |\rho| < \pi/6$                                                                                                           &                                \\ \cline{2-2}
		\multicolumn{1}{|c|}{\multirow{-5}{*}{\cellcolor[HTML]{C0C0C0}\begin{tabular}[c]{@{}c@{}}3DOF\\  Helicopter\end{tabular}}}             & $ -\epsilon + |\rho| <$  0.2 \& $0.3<\epsilon$ \& $\pi/6 < |\rho| $                                                                                                           & \multirow{-5}{*}{\%64.3}       \\ \hline
	\end{tabular}
\caption{Weighted average availability of the embedded system.}
	\label{table:availability}
\end{table}

\section{Implementation and Attacks}

\label{sec:implementation}

Now, we present an overview of our prototype implementation of a restart-based protection approach for the control embedded system of an \emph{actual} 3DOF helicopter platforms. More detailed description on the implementation is given in our technical report~\cite{githubrepoemsoft}. In addition, through \emph{hardware-in-the-loop} approach, we use the same embedded system for controlling the warehouse temperature system. We test some attacks on both system and verify that the safety requirements are respected. 

\begin{figure}[ht]
	\begin{center}
		\includegraphics[width=0.45\textwidth]{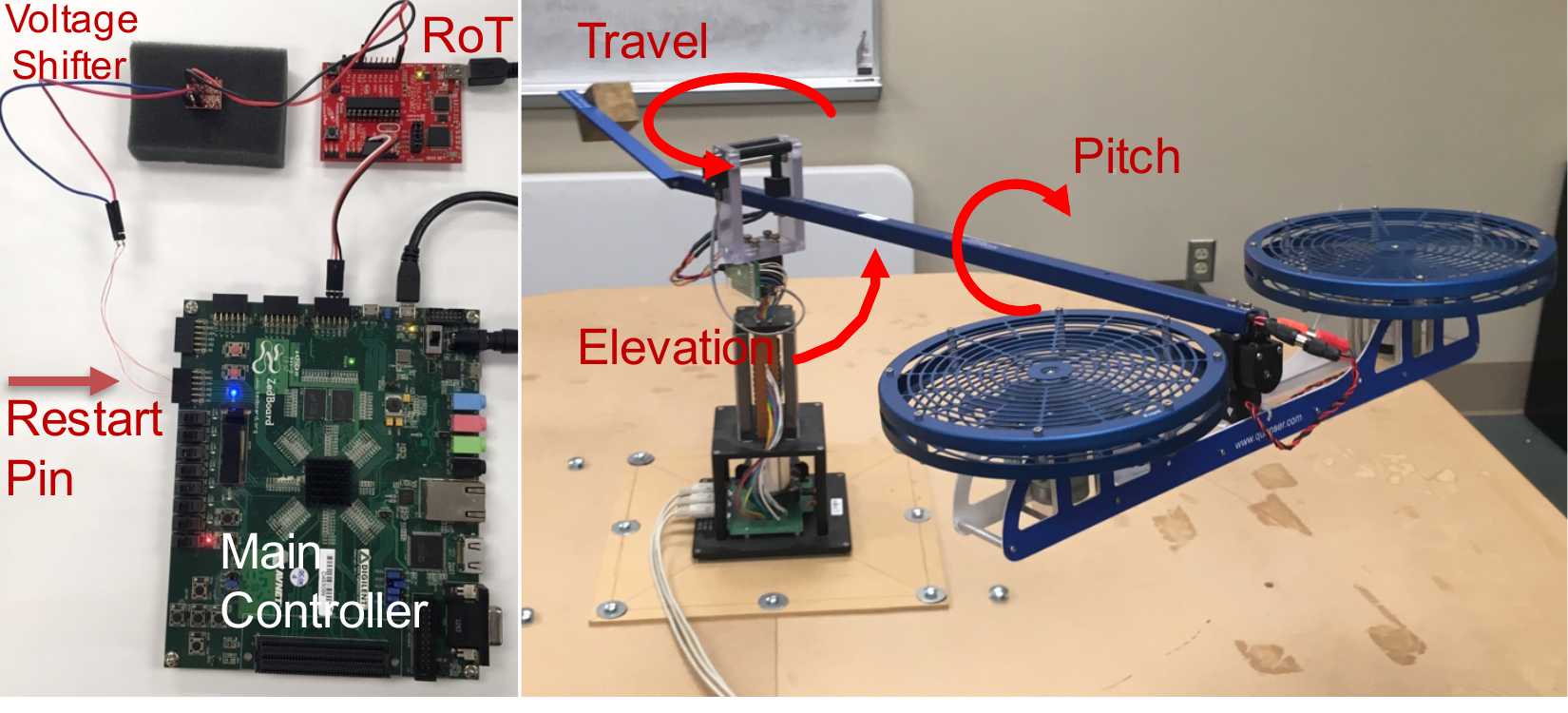}
		\caption{3DOF helicopter used as the test-bed.}
		\label{fig:3dof}
	\end{center}
\end{figure}

\subsubsection*{RoT Module:}
We implemented the RoT module using a minimal MSP430G2452 micro-controller on a MSP-EXP430G2 LaunchPad board~\cite{launchpadboard}. 
To enable restarting, pin P2.0 of the micro-controller is connected to the restart input of the main controller. Internal \emph{Timer A} of the micro-controller is used for implementing the restart timer. It is a 16-bit timer configured to run at a clock rate of 1MHz (\ie $1\mu s$ per timer count) using the internal digitally controlled oscillator. A counter inside the interrupt handler of \emph{Timer A} is used to extend the timer with an adjustable factor, in order to enable the restart timer to count up to the required range based on the application's needs.

The $I^2C$ interface is adopted for the main controller to set the restart time on RoT module. After each restart, RoT acts as an $I^2C$ slave waiting for the value of the restart time. As soon as the main controller sends the restart time, RoT disables the $I^2C$ interface and activates the internal timer. Upon expiration of the timer, an active signal is set on the restart pin to trigger the restart event and the $I^2C$ interface is activated again for accepting the next restart time.

\subsubsection*{Main Controller:}
We used a Zedboard~\cite{zboarddatasheet} to implement the main controller of the 3DOF helicopter and warehouse temperature control system. Zedboard is a development board for Xilinx's Zynq-7000 series all programmable SoC. 
It contains a XC7Z020 SoC, 512MB DDR3 memory and
an on-board 256MB QSPI Flash.
The XC7Z020 SoC consists of a processing system (PS) with dual ARM Cortex-A9 cores and a 7-series programmable logic (PL).
The processing system runs at 667MHz.
In this evaluation, only one of the ARM cores is used while the idle one is not activated.
The programmable logic is programmed to provide the $I^2C$ and \emph{UART} interfaces that are used for connecting to the RoT module and the 3DOF helicopter.

The main controller runs \emph{FreeRTOS}~\cite{FreeRTOS}, a preemptive real-time operating system. Immediately after the reboot when the \emph{FreeRTOS} starts, \texttt{SafetyController} and \texttt{FindRestartTime} tasks are created and executed. \texttt{SafetyController} is a periodic task with a period of $20ms$~($50Hz$). And, \texttt{FindRestartTime} is a single task that executes in a loop, and it only breaks out when a positive restart time is found. At this point, the restart time is sent to the RoT module via $I^2C$ interface, \texttt{SafetyController} and \texttt{FindRestartTime} tasks are terminated and the main control application tasks are created. To calculate reachability in run-time in function \texttt{FindRestartTime}, we used the implementation of real-time reachability approach in~\cite{rt-reach}.

Reset pin of Zedboard is connected to RoT module's reset output pin.
The entire system (both PS and PL) on Zedboard is restarted when the reset pin is pulled to low state.
The boot process starts when the reset pin is released (returning to high state).
A boot-loader is first loaded from the on-board QSPI Flash.
The image for PL is then loaded by the boot-loader to program the PL which is necessary for PS to operate correctly.
Once PL is ready, the image for PS is loaded and \emph{FreeRTOS} will take over the control of the system.

To further enhance the security of \emph{FreeRTOS}, we adopt \emph{randomization techniques} to randomize the following attributes: 
\ci the order of the task creation and \cii task scheduling.
After every restart, we randomize the order of creating the application tasks.
By doing so, the memory allocation is somewhat shuffled as the memory blocks are assigned to the tasks dynamically when they are created in \emph{FreeRTOS}.
Additionally, we port \emph{TaskShuffler}, a schedule randomization protocol~\cite{taskshuffler}, on \emph{FreeRTOS}.
This technique randomly selects a task from the scheduler's ready queue to execute, subject to each task's pre-computed priority inversion budget, at every scheduling point. Randomizing system parameters increases the difficulty of launching the same attack after each restart.

\textbf{3DOF Helicopter Controller:} The main controller unit interfaces with the 3DOF helicopter through a PCIe-based \textit{Q8 High-Performance H.I.L. Control and data acquisition unit}~\cite{q8daq} and an intermediate Linux-based PC. The PC communicates with the ZedBoard through the serial port. At every control cycle, a task on the controller communicates with the PC to receive the sensor readings~(elevation, pitch, and travel angles) and send the motors' voltages. The PC uses a custom Linux driver to send the voltages to the 3DOF helicopter motors and reads the sensor values. The code for the controller including the controllers and optimizations to reduce the boot sequence of ZedBoard is available online~\cite{githubrepoemsoft}.

\textbf{Warehouse Temperature Controller:} For system, due to lack of access to the real physical plant we used a hardware-in-the-loop approach. Here, the PC runs a simulation of the temperature based on the heat transfer model of Equation~\ref{eq:heatermodel1} and~\ref{eq:heatermodel2}. Similar to the 3DOF helicopter, the controller is implemented on the ZedBoard with the exact same configuration~(RoT, serial port connection, I$^2$C interface, $50Hz$ frequency) as described earlier. Control commands are sent to the PC, applied to the system and the state is reported back to the controller.

%
%

\subsection{Attacks on the Embedded System}

To evaluate the effectiveness of the proposed approach, we tested three types of  synthetic attacks on the implemented controller of the 3DOF helicopter with actual plant and one attack on the hardware-in-the-loop implementation of the warehouse temperature control system.In these experiments, our focus is on the actions of the attacker after the breach into the system has taken place. Hence, the breaching approach and exploitation of the vulnerabilities is not a concern of these experiments.

In the first attack experiment, we evaluate the protection provided by our approach in presence of an attacker who, once activated, is capable of killing the main controller task. The attack is activated at a random time after the end of SEI. We observed that under this attack, the 3DOF helicopter did not hit the surface~(\ie it always remained within the set of admissible states).

In the second attack experiment, attacker replaces the sensor readings of the system with corrupted values with the aim of destabilizing the plant and reducing its progress. The activation time of this attack, similar to the previous one, is dictated by a random variable. Similar to the first attack experiment, system remained safe throughout the attacks. And the progress of the system was negatively impacted for as the activation time of the attack became smaller.

In the third attack experiment, we investigate the impact of a \emph{worst-case} attacker who becomes active immediately after the SEI and replaces the original controller with a malicious process that turns off the fans of the helicopter leading it to hit the surface. We observed that the system was able to tolerate this attack and did not hit the surface. The trace of the system during a time interval of activity is plotted in Figure~\ref{fig:traceofattack}. 

From figure, it can be seen the controller spends most of the time in SEI~(red region) or in reboot~(white region). This is due to the fact that this extreme-case attack is activated immediately after each SEI and destabilizes the helicopter. By the time that the reboot is complete~(end of the white region), system is close to unsafe states. Hence, SEI becomes longer as SC is stabilizing the system. under this very extreme attack model, the system does not make any progress towards its designated path under this attack model.

\begin{figure}
\includegraphics[width=0.48\textwidth]{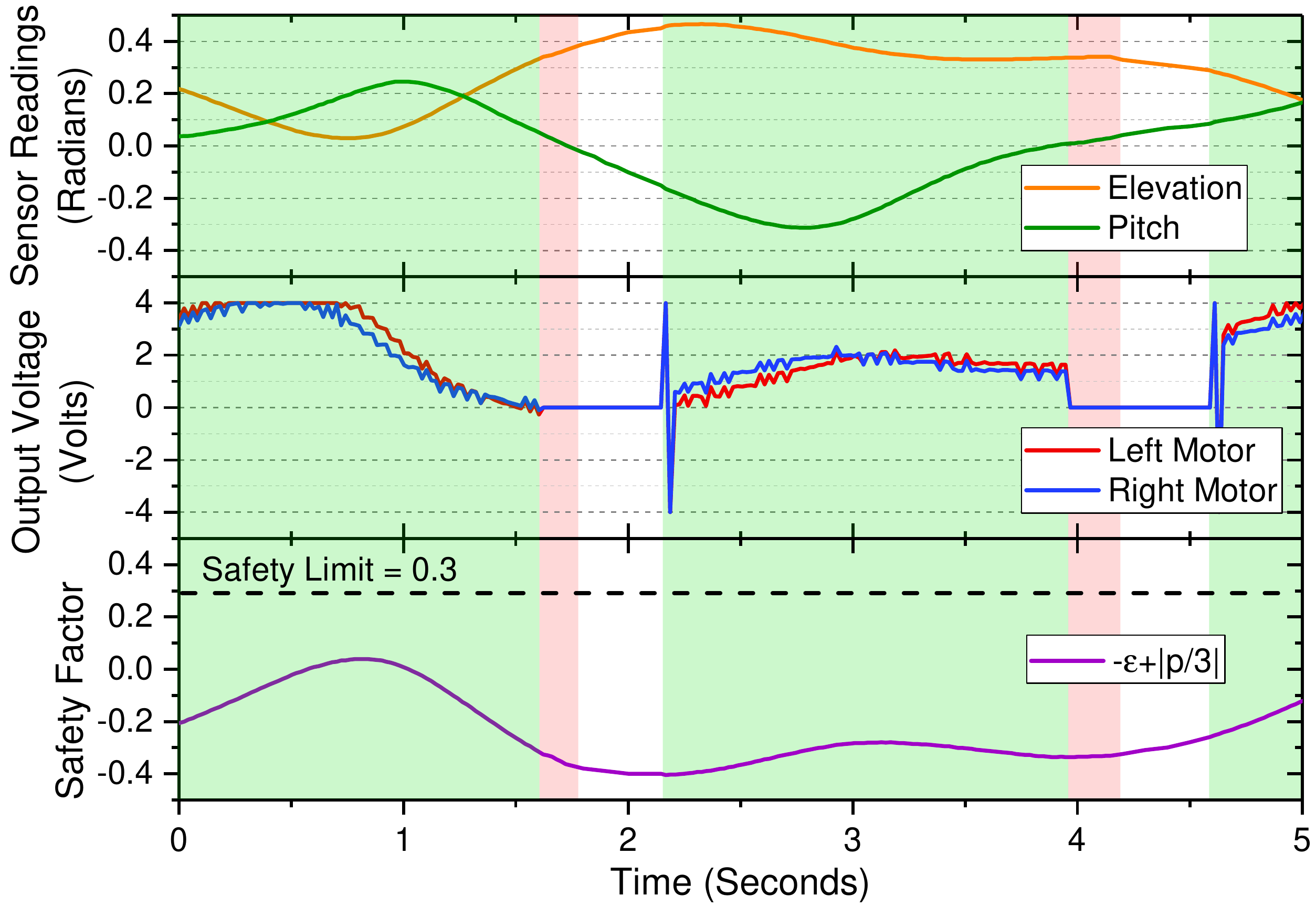}
\caption{Trace of 3DOF Helicopter during two restart cycles when the system is under worst-case attack that is active immediately after SEI. Green: SEI, red: system reboot, white: Normal Operation~(In this case attacker)}
\label{fig:traceofattack}
\end{figure}

In the last experiment, an attack is performed on the embedded controller of the warehouse temperature. In this experiment, the outside temperature is set to $45^\circ C$ and initial temperature of the room is set to $25^\circ C$. The attacker takes full control of the system and immediately after the SEI, activates both heaters to increase the temperature. We observed that system restarts before the room temperature goes above $30^\circ C$ and after the restart, SC drives the temperature towards the safe range.

\section{Discussion}
\label{sec:discussion}

\textbf{Suitable Application Targets:} Restart-based protection aims to provide a very strong protection guarantees for the CPS. However, it is not necessarily applicable to every system. The main limiting factor is the speed of the dynamics of the physical plant and the restart time of the embedded platform. Physical systems that are unstable, have very narrow safety margin, or have very fast moving dynamics have a very low tolerance for misbehavior. Adversaries can damage such systems very quickly. As a result, applying restart-based protection to such systems would require them to restart very fast. Very fast restart rates may reduce the controller's available time and essentially prevent the system from making any progress. It may also create implementation challenges; a system may require some time to establish connection over Internet or to authenticate that may not be possible if the system has to restart very frequently. Therefore, before adapting the proposed mechanism for a system, a set of tests similar to the evaluations in Section~\ref{sec:restarttimeeval} need to be performed to check if the restart times are within a suitable range. On the other hand, systems with slower dynamics and winder safety margins~(\eg Internet of Things~(IoT) applications, infrastructure, municipal systems, transportation, and \etc) are the most suitable category to benefit from this technique.


\textbf{Threat Model:} The restart-based approach proposed in this paper provides guaranteed protection for the physical sub-system in presence of the attacks. However, attackers may cause other forms of damage to the system that are not physical; \eg attacks targeting performance of the system, logging and stealing information, and man-in-the middle attacks. Even though such attacks are not the main focus of this paper, a similar approach to the analysis of Section~\ref{sec:probability} can be performed to evaluate the effectiveness of restart-based protection on mitigating the impact of such attacks.


\textbf{Increasing the Entropy:} Despite the effectiveness of restarting and software reloading in removing the malicious components from the system, restarting does not fix the vulnerabilities of the system that provided the opportunity for the intruder to breach into the system. An attacker maybe able to re-launch the same attack after the restart and gain control of the system. It turns out, randomizing parameters and operation of the system introduces a noise in the execution pattern of the system within each cycle, thereby, increasing the time and effort needed by the attacker to trigger the attack. In other words, randomization reduces the impact of previous knowledge in launching new attacks. There is a considerable body of work on various randomization techniques for security protection; \eg schedule Randomization~\cite{taskshuffler}, address space randomization~(ASLR)~\cite{Shacham:2004:EAR:1030083.1030124}, and Kernel Address Space Randomization~(KASLR)~\cite{Jang:2016:BKA:2976749.2978321}.


\section{Related Work}
\label{sec:related}

The idea of restarting the entire system or its components at run-time has been explored in earlier research in two forms of \ci \emph{revival} (\ie reactively restart a failed component) and \cii \emph{rejuvenation} (\ie prophetically restart functioning components to prevent state degradation).  
Authors in literature \cite{candea2001recursive} introduce recursively restartable systems as a design paradigm for highly available systems and uses a combination of revival and rejuvenation techniques. Earlier research~\cite{Candea03crash-onlysoftware,candea2003jagr,candea2004microreboot} illustrate the concept of microreboot that consists of having fine-grain rebootable components and trying to restart them from the smallest component to the biggest one in the presence of faults. Others in literature~\cite{vaidyanathan2005comprehensive,garg1995analysis,huang1995software} focus on failure and fault modeling and trying to find an optimal rejuvenation strategy for various systems. The above cited works are proposed for traditional (\ie \textit{non safety-critical}) computing systems such as servers and switches and are not directly applicable to safety-critical CPS. However, they demonstrate that the concept of restarting the system is considered as a reliable method to recover a faulty/compromised system.


In the context of safety-critical systems, authors proposed to utilize restarting as a recovery method from SW bugs/faults for safety-critical applications using an additional HW unit as a back up controller during the restart~\cite{fardin2016reset}. Unlike ours, the architecture is only concerned with fault-tolerance and the security aspect of the safety-critical CPS was not addressed. Other work exists~\cite{abdi2017application}  where the authors propose the procedures to design a controller (referred to as \textit{base controller}) that enable the entire computing system to be safely restarted at run-time. 
Base Controller keeps the system inside a subset of safety region by updating the actuator input at least once after every system restart. 
Similar to the previous work, this work is also concerned with the fault-tolerance and cannot provide any guarantees in presence of an adversary.
Further work~\cite{mhasan_resecure16} leverages the idea of restarting to improve security of safety-critical real-time systems. 
However, it requires an additional customized hardware unit to execute a back up controller during the restart to maintain the plant's stability. Our approach provides guaranteed safety for the plant and is implementable on single COTS embedded system.
While the idea of restarting the embedded system is explored in earlier research for fault-tolerance, to the best of our knowledge, this paper is the first comprehensive work to utilize system-wide restarts and physical dynamics of the system to provide security for safety-critical CPS.


There is a considerable overlap between real-time systems and safety-critical embedded CPS. The problem of studying various There are recent works that study the problem of attack protection and detection approaches in real-time systems. Recent work \cite{slack_cornell, securecore, securecore_memory} on dual-core based hardware/software architectural frameworks aim to protect RTS against security vulnerabilities. In literature \cite{mohan_s3a} authors leverage the deterministic execution behavior of RTS and use Simplex architecture~\cite{sha2001using} to detect intrusion while guaranteeing the safety. Despite the improved security, all these techniques have false negatives and do not provide any guarantees. In contrast, our restart-based mechanism guarantees that the attacker cannot damage the physical sub-system.
 

\section{Conclusion}
\label{sec:conclusion}
	In this paper, we aim to decouple the \textit{safety} requirements of the physical plant from 
the \textit{security} properties. 
Because of inertia, pushing 
a physical plant from a given (potentially safe) state to an unsafe state --
even with complete adversarial control -- is not instantaneous and often takes 
finite (even considerable) time. We leverage
this property to calculate a \textit{safe operational window} and combine it with the effectiveness of {\em system-wide
	restarts} to decrease the efficacy of malicious actors.
Designers of such safety-critical systems can now evaluate
the necessary trade-offs between control system performance and
increased security guarantees -- thus improving the overall design of embedded CPS in the future.


\bibliographystyle{abbrv}
\bibliography{monowar,fardin, implementation,blank}

\end{document}